\newcommand\reallywidehat[1]{%
\savestack{\tmpbox}{\stretchto{%
  \scaleto{%
    \scalerel*[\widthof{\ensuremath{#1}}]{\kern-.6pt\bigwedge\kern-.6pt}%
    {\rule[-\textheight/2]{1ex}{\textheight}}
  }{\textheight}%
}{0.5ex}}%
\stackon[1pt]{#1}{\tmpbox}%
}
\newcommand{\bE}{\mathbb{E}}
\newtheorem{definition}{Definition}
\newtheorem{theorem}{Theorem}
\newtheorem{conjecture}{Conjecture}
\newtheorem{lemma}{Lemma}
\newtheorem{problem}{Problem}
\newtheorem{rmk}{Remark}
\newenvironment{fminipage}%
  {\begin{Sbox}\begin{minipage}}%
  {\end{minipage}\end{Sbox}\fbox{\TheSbox}}
\newcommand*{\rom}[1]{\expandafter\@slowromancap\romannumeral #1@}
\newcommand{\Ind}{\mathbbm{1}}
\newcommand {\pr} {\mathbb{P}}
\newcommand{\calA}{{\cal A}}
\newcommand{\calC}{{\cal C}}
\newcommand{\calE}{{\cal E}}
\newcommand{\calG}{{\cal G}}
\newcommand{\calH}{{\cal H}}
\newcommand{\calK}{{\cal K}}
\newcommand{\calL}{{\cal L}}
\newcommand{\calM}{{\cal M}}
\newcommand{\calO}{{\cal O}}
\newcommand{\calP}{{\cal P}}
\newcommand{\calQ}{{\cal Q}}
\newcommand{\calR}{{\cal R}}
\newcommand{\calS}{{\cal S}}
\newcommand{\calW}{{\cal W}}
\newcommand{\calX}{{\cal X}}
\newcommand{\be}{\begin{equation}}
\newcommand{\ee}{\end{equation}}
\newcommand{\beqna}{\begin{eqnarray}}
\newcommand{\eeqna}{\end{eqnarray}}
\newcommand{\p}[1]{\left(#1\right)}
\newcommand{\pp}[1]{\left[#1\right]}
\newcommand{\ppp}[1]{\left\{#1\right\}}
\newcommand{\s}[1]{\mathsf{#1}}
\tikzstyle{block} = [draw, fill=blue!20, rectangle, 
\tikzstyle{sum} = [draw, fill=blue!20, circle, node distance=1cm]
\tikzstyle{input} = [coordinate]
\tikzstyle{output} = [coordinate]
\tikzstyle{pinstyle} = [pin edge={to-,thin,black}]
\begin{document}

\title{Random Subgraph Detection Using Queries}

\author{Wasim~Huleihel\thanks{W. Huleihel is with the Department of Electrical Engineering-Systems at Tel Aviv university, {T}el {A}viv 6997801, Israel (e-mail:  \texttt{wasimh@tauex.tau.ac.il}). The work of W. Huleihel was supported by the ISRAEL SCIENCE FOUNDATION (grant No. 1734/21).}~~~~~~Arya~Mazumdar\thanks{A.Mazumdar is with the Halıcıoğlu Data Science Institute at University of California, San Diego, USA (email: \texttt{arya@ucsd.edu).} The work of A. Mazumdar was supported by NSF under Award 2217058, and Award 2133484.}~~~~~~Soumyabrata~Pal\thanks{S. Pal is with Google Research, Bengaluru, India (email: \texttt{soumyabrata@google.com}).}}


\maketitle

\begin{abstract}

The planted densest subgraph detection problem refers to the task of testing whether in a given (random) graph there is a subgraph that is unusually dense. Specifically, we observe an undirected and unweighted graph on $n$ vertices. Under the null hypothesis, the graph is a realization of an Erd\H{o}s-R\'{e}nyi graph with edge probability (or, density) $q$. Under the alternative, there is a subgraph on $k$ vertices with edge probability $p>q$. The statistical as well as the computational barriers of this problem are well-understood for a wide range of the edge parameters $p$ and $q$. In this paper, we consider a natural variant of the above problem, where one can only observe a relatively small part of the graph using edge queries. This problem is relevant, for example, when access to the edges (connections) between vertices (individuals) may be scarce due to privacy concerns. 

For this model, we determine the number of queries necessary and sufficient for detecting the presence of the planted subgraph. Specifically, we show that any (possibly randomized) algorithm must make $\s{Q} = \Omega(\frac{n^2}{k^2\chi^4(p||q)}\log^2n)$ non-adaptive queries (on expectation) to the adjacency matrix of the graph to detect the planted subgraph with probability more than $1/2$, where $\chi^2(p||q)$ is the Chi-Square distance. On the other hand, we devise a quasi-polynomial-time algorithm that detects the planted subgraph with high probability by making $\s{Q} = O(\frac{n^2}{k^2\chi^4(p||q)}\log^2n)$ non-adaptive queries. We then propose a polynomial-time algorithm which is able to detect the planted subgraph using $\s{Q} = O(\frac{n^3}{k^3\chi^2(p||q)}\log^3 n)$ queries. We conjecture that in the leftover regime, where $\frac{n^2}{k^2}\ll\s{Q}\ll \frac{n^3}{k^3}$ and $k = \omega(\sqrt{n})$, no polynomial-time algorithms exist. Our results resolve two open questions posed in the past literature, where the special case of detection and recovery of a planted clique (i.e., $p=1$) was considered. 
    
\end{abstract}

\section{Introduction}\label{sec:intro}

In the planted densest subgraph ($\s{PDS}$) formulation of community detection, the task is to detect the presence of a small subgraph of size $k$ planted in an Erd\H{o}s-R\'{e}nyi random graph. This problem has been studied extensively both from the algorithmic and the information-theoretic perspectives
\cite{arias2014community, butucea2013detection, verzelen2015community, chen2016statistical, montanari2015finding, candogan2018finding, hajek2016achieving}. Nonetheless, the best known algorithms exhibit a peculiar phenomenon: there appears to be a statistical-computational gap between the minimum $k$ at which this task can be solved and the minimum $k$ at which it can be solved in polynomial-time. Tight statistical-computational bounds for several parameter regimes of the $\s{PDS}$ were recently established through average-case reductions from the planted clique conjecture \cite{ma2015computational, hajek2015computational, brennan2018reducibility}. The regimes in which these problems are information-theoretically impossible, statistically possible but computational hard, and admit polynomial-time algorithms appear to have a common structure.

Recently, models of clustering and community detection that allow active querying for pairwise similarities have become quite popular. This includes active learning, as well as data labeling by amateurs via crowdsourcing. Clever implementation of an interactive querying framework can improve the accuracy of clustering and help in inferring labels of large amount of data by issuing only a small number of queries. Queries can be easily implemented, e.g., via captcha. Non-expert workers in crowdsourcing platforms are often not able to label the items directly, however, it is reasonable to assume that they can compare items and judge whether they are similar or not. Understanding the query complexity to recover hidden structures is a fundamental theoretical question with several applications, from community detection to entity resolution~\cite{Mazumdar2017ClusteringWN,Mazumdar2017QueryCO}. For example, analyzing query complexity and designing query-based algorithms is relevant to community recovery in social networks, where access to the connections (edges) between individuals may be limited due to privacy concerns; or the network can be very large so only part of the graph can be sampled.

In this paper, we investigate a natural variant of the $\s{PDS}$ problem above, where one can only observe a small part of the graph using non-adaptive edge queries. A precise description of our model as well as our main goals are described next.

\subsection{Model Formulation and Goal}

To present our model, we start by reminding the reader the basic mathematical formulation of the $\s{PDS}$ detection problem. Specifically, let $\calG(n,q)$ denote the Erd\H{o}s-R\'{e}nyi random graph with $n$ vertices, where each pair of vertices is connected independently with probability $q$. Also, let $\calG(n,k,p,q)$ with $p>q$ denote the ensemble of random graphs generated as follows:
\begin{itemize}
    \item Pick $k$ vertices uniformly at random from $[n]\triangleq\{1,2,\ldots,n\}$, and denote the obtained set by $\calK$.
    \item Any two vertices in $\calK$ are connected with probability $p$; all other pairs of vertices are connected with probability $q$.
\end{itemize}
In summary, $\calG(n,k,p,q)$ is the ensemble of graphs of size $n$, where a random subgraph $\calG(k,p)$ is \emph{planted} in an Erd\H{o}s-R\'enyi random graph $\calG(n,q)$; this ensemble is known as the $\s{PDS}$ model. The vertices in $\calK$ form a \emph{community} with higher density than elsewhere. In this paper, we focus on the regime where both edge probabilities $p$ and $q$ are fixed and independent of $n$. The $\s{PDS}$ \emph{detection problem} is defined as follows.
\begin{definition}[$\s{PDS}$ detection problem]\label{def:PDS}
The $\s{PDS}$ detection problem with parameters $(n,k,p,q)$, henceforth denoted by $\s{PDS}(n,k,p,q)$, refers to the problem of distinguishing hypotheses:
\begin{align}
    \calH_0: \s{G}_n\sim\calG(n,q)\quad\quad\s{vs.}\quad\quad\calH_1: \s{G}_n\sim\calG(n,k,p,q).\label{eqn:model}
\end{align}
\end{definition}
The statistical and computational barriers of the problem in Definition~\ref{def:PDS} depend on the parameters $(n,k,p,q)$. Roughly speaking, if the planted subgraph size $k$ decreases, or if the ``distance" between the densities $p$ and $q$ decrease, the distributions under the null and alternative hypotheses become less distinguishable. The statistical limits (i.e., necessary and sufficient conditions) for detecting planted dense subgraphs, without any constraints on the computational complexity, were established in \cite{arias2014community,verzelen2015community}. Interestingly, in the same papers it was observed that state-of-the-art low-complexity algorithms are highly suboptimal. This raised the intriguing question of whether those gaps between the amount of data needed by all computationally efficient algorithms and what is needed for statistically optimal algorithms is inherent. According, quite recently \cite{hajek2015computational,brennan2018reducibility,Brennan2019UniversalityOC}, tight statistical-computational gaps for several parameter regimes of $\s{PDS}$ were established through average-case reductions from the planted clique conjecture (see, Conjecture~\ref{conj:PC} below). 

In this work, we consider a variant of the $\s{PDS}$ detection problem where one can only inspect a small part of the graph by \emph{non-adaptive edge queries}, defined as follows.
\begin{definition}[Oracle/Edge queries]\label{def:oracle}
Consider a graph $\s{G}_n=([n],\calE)$ with $n$ vertices, where $\calE$ denotes the set of edges. An oracle $\calO:[n]\times[n]\to\{0,1\}$, takes as input a pair of vertices $i,j\in[n]\times[n]$, and if $(i,j)\in\calE$, namely, there exists an edges between the chosen vertices, then $\calO(i,j)=1$, otherwise, $\calO(i,j)=0$.
\end{definition}

We consider query mechanisms that evolve dynamically over $\s{Q}$ steps/queries in the following form: in step number $\ell\in[\s{Q}]$, the mechanism chooses a pair of vertices $\s{e}_\ell\triangleq(i_\ell,j_\ell)$ and asks the oracle whether these vertices are connected by an edge or not. Generally speaking, either \emph{adaptive} or \emph{non-adaptive} query mechanisms can be considered. In the former, the chosen $\ell$th pair may depend on the previously chosen pairs $\{\s{e}_i\}_{i<\ell}$, as well as on past responses $\{\calO(\s{e}_i)\}_{i<\ell}$. In non-adaptive mechanisms, on the other hand, all queries must be made upfront. In this paper we focus mainly on non-adaptive mechanisms. The query-$\s{PDS}$ detection problem is defined as follows.
\begin{problem}[Query-$\s{PDS}$ detection problem]\label{def:PDSquery}
Consider the $\s{PDS}$ detection problem in Definition~\ref{def:PDS}.  There is an oracle $\calO$ as defined in Definition~\ref{def:oracle}. Find a set of queries $\mathbb{Q}\subseteq[n]\times[n]$ such that
$\s{Q}=|\mathbb{Q}|$, and from the oracle answers it is possible to solve (as defined below) the detection problem in \eqref{eqn:model}. Henceforth, we denote this detection problem by $\s{QPDS}(n,k,p,q,\s{Q})$.
\end{problem}
A detection algorithm $\calA_n$ for the problem in Definition~\ref{def:PDSquery}, makes up to $\s{Q}$ non-adaptive edge queries, and based on the query responses is tasked with outputting a decision in $\{0,1\}$. We define the \emph{risk} of a detection algorithm $\calA_n$ as the sum of its $\s{Type}$-$\s{I}$ and $\s{Type}$-$\s{II}$ errors probabilities, namely,
\begin{align}
\s{R}(\calA_n) = \pr_{\calH_0}(\calA_n(\s{G}_n)=1)+\pr_{\calH_1}(\calA_n(\s{G}_n)=0),
\end{align}
where $\pr_{\calH_0}$ and $\pr_{\calH_1}$ denote the probability distributions under the null and the alternative hypothesis, respectively. If $\s{R}(\calA_n)\to0$ as $n\to\infty$, then we say that $\calA_n$ solves the detection problem. Our primary goals in this paper are:
\begin{itemize}
\item To characterize the statistical limits of $\s{QPDS}(n,k,p,q,\s{Q})$, namely, to derive necessary and sufficient conditions for when its \emph{statistically impossible} and \emph{statistically possible} to solve the detection problem, ignoring algorithmic computational constraints. 
\item To devise efficient polynomial-time algorithms for $\s{QPDS}(n,k,p,q,\s{Q})$.
\end{itemize}

\subsection{Related Work and Main Contributions}

The problem of finding cliques in an Erd\H{o}s-R\'{e}nyi random graph under the same edge query model was considered in \cite{Feige20}. It was shown that under certain limitations on the adaptivity of the considered class of algorithms, any algorithm that makes $\s{Q} = O(n^\alpha)$ adaptive edge queries, with $\alpha<2$, in $\ell$ rounds finds cliques of size at most $(2-\epsilon)\log_2n$ where $\epsilon= \epsilon(\alpha,\ell)>0$. This lower bound should be contrasted with the fact that current state-of-the-art algorithms that make $\s{Q} = O(n^\alpha)$ queries find a clique of size approximately $(1+\alpha/2)\log_2n$. This result was later improved in \cite{Alweiss2020}, where the dependency of $\epsilon$ on $\ell$ was removed. Closing the gap between those bounds seems to be a challenging open problem. We also mention \cite{Ferber15,Ferber17,Conlon18}, which study the problems of finding a Hamilton cycle, long paths, and  a copy of a fixed target graph, in sparse random graphs under the adaptive edge query model. Another recent line of active research is the analysis of the query complexity in certain clustering tasks, such as, the stochastic block model and community detection \cite{Mazumdar2017ClusteringWN,Mazumdar2017QueryCO,Vinayak2016CrowdsourcedCQ,Hartmann2016ClusteringEN,Anagnostopoulos2016CommunityDO}.

Most closely related to our paper are \cite{racz2020finding,mardia2020finding}, where the special case of the planted clique model, where $p = 1$ and $q=1/2$, under the adaptive edge query model, was considered. For this model, assuming unbounded computational resources, upper and lower bounds on the query complexity for both detecting and recovery were established. Specifically, it was shown in \cite{racz2020finding} that no algorithm that makes at most $\s{Q} = o(n^2/k^2)$ adaptive queries to the adjacency matrix of $\s{G}_n$ is likely to solve the detection problem. On the other hand, when $k\geq (2 + \epsilon) \log_2 n$, for any $\epsilon>0$, it was shown in \cite{racz2020finding} that there exists an algorithm (not
polynomial time) that solves the detection problem by making at least
$\s{Q} = (2+\epsilon)\frac{n^2}{k^2}\log_2^2n$ adaptive queries. For the recovery task, it was shown in \cite{racz2020finding} that no algorithm that makes at most $\s{Q} = o(n^2/k^2+n)$ adaptive queries exists, while recovery is possible using $\s{Q} = o(n^2/k^2\log_2^2n+n\log_2n)$ adaptive queries. Note that when the whole graph is shown to the algorithm, namely, $\s{Q}=\binom{n}{2}$, then the above detection upper bound boils down to $k>(2+\epsilon)\log_2n$, which is folklore and well-known to be tight. On the other hand, the above detection lower bound gives $k=O(1)$, which is loose. Sub-linear time algorithms that find the planted clique in the regime $k=\omega(\sqrt{n\log\log n})$ were proposed in \cite{mardia2020finding}. Specifically, among other things, it was shown that a simple and efficient algorithm can detect the planted clique using $\s{Q} = O(\frac{n^3}{k^3}\log^3n)$ non-adaptive queries; conversely using the planted clique conjecture, it was shown that a certain class of non-adaptive algorithms cannot detect the planted clique if $\s{Q} = o(\frac{n^3}{k^3})$, suggesting that in the regime where $\frac{n^2}{k^2}\ll\s{Q}\ll\frac{n^3}{k^3}$ polynomial-time algorithms do not exist.

In this paper, we generalize and strengthen the results of \cite{racz2020finding}, resolving two open questions raised in the same paper. First, we consider the more general $\s{PDS}$ model which allows for arbitrary edge probabilities. While this might seem as a rather incremental contribution, it turns out that the lower bounding techniques used in \cite{racz2020finding} are quite weak and result in loose bounds on the query complexity for the $\s{PDS}$ model. In a nutshell, the main observation in the proof of the lower bound in \cite{racz2020finding} is that if $\s{Q}\ll n^2/k^2$, then with high probability all edge queries will fall outside the planted clique, no matter how strong/sophisticated the query mechanism is. Therefore, detection would be impossible. While the same bound holds for the $\s{PDS}$ model as well, it does not capture the intrinsic dependency of the query complexity on the edge densities $p$ and $q$. More importantly, as was mentioned above, even for the planted clique problem, the results of \cite{racz2020finding} exhibit a $\s{polylog}$ gap between the upper and lower bounds on the query complexity. We close this gap by providing asymptotically tight bounds. Specifically, we show that an algorithm that must makes $\s{Q} = \Omega(\frac{n^2}{k^2\chi^4(p||q)}\log^2n)$ non-adaptive queries to the adjacency matrix of the graph to be able to detect the planted subgraph, where $\chi^2(p||q)$ is the chi-square distance. On the other hand, we devise a quasi-polynomial-time combinatorial algorithm that detects the planted subgraph with high probability by making $\s{Q} = O(\frac{n^2}{k^2\chi^4(p||q)}\log^2n)$ non-adaptive queries. For the lower bound, we derive first high probability lower and upper bounds on the number of edge queries $\mathscr{C}$ that fall inside the planted subgraph, associated with the optimal query mechanism. Then, we develop a general information-theoretic lower bound on the risk of any algorithm that is given those $\s{Q}$ queries, and essentially observes a subgraph of the $\s{PDS}$ model, with a planted signal that is an arbitrary sub-structure of the original planted subgraph on $\mathscr{C}$ edges. We also propose a polynomial-time algorithm which is able to detect the planted subgraph using $\s{Q} = \Omega(\frac{n^3}{k^3\chi^2(p||q)}\log^3 n)$ queries. In the leftover regime, where $\frac{n^2}{k^2}\ll\s{Q}\ll \frac{n^3}{k^3}$ and $k=\omega(\sqrt{n})$, we conjecture that no polynomial-time algorithms exist for detection. Finally, as we discuss later in the paper, the generality of our techniques allows for arbitrary \emph{planting} and \emph{noise} distribution $\calP$ and $\calQ$, respectively, where the $\s{PDS}$ model boils down to $\calP=\s{Bern}(p)$ and $\calQ=\s{Bern}(q)$.


\subsection{Main Results}

The following theorem determines (up to a constant factor) the number of queries necessary to solve the query-$\s{PDS}$ detection problem. Recall that $\chi^2(p||q)$ and $d_{\s{KL}}(p||q)$ denote the chi-square distance and the Kullback-Leibler (KL) divergence between two $\s{Bern}(p)$ and $\s{Bern}(q)$ random variables, respectively. Note that $\chi^2(p||q) = \frac{(p-q)^2}{q(1-q)}$. 

\begin{theorem}[Detecting a planted densest subgraph]\label{thm:1}  Consider the $\s{QPDS}(n,k,p,q)$ detection problem, and let $\epsilon>0$ be arbitrary. The following statements hold.
\begin{enumerate}
\item (Non-adaptive lower-bound) The risk of any algorithm $\calA_n$ that makes at most $\s{Q}$ non-adaptive edge queries, is $\s{R}(\calA_n)\geq 1-o(1)$, if
\begin{align}
\s{Q} < (2-\epsilon)\cdot\frac{n^2}{k^2\chi^4(p||q)}\log^2 \frac{n}{k}.
\label{eqn:ITlimit}
\end{align}
\item (Statistical sufficiency) Suppose that $k\geq(2+\epsilon_0)\frac{\log n}{d_{\s{KL}}(p||q)}$, for some $\epsilon_0>0$. 
It is possible to detect the presence of a planted densest subgraph, i.e, $\s{R}(\calA_n)\leq o(1)$, by making
\begin{align}
\s{Q} \geq (2+\epsilon)\cdot\frac{n^2}{k^2d_{\s{KL}}^2(p||q)}\log^2 \frac{n}{k}\label{eqn:scanstat}
\end{align}
queries. Moreover, the queries can be non-adaptive.
\item (Computational sufficiency) Suppose that $k=\Omega(\sqrt{n\log n/\chi^2(p||q)})$. There exists a polynomial-time algorithm $\calA_n$ that can detect the presence of a planted densest subgraph, i.e, $\s{R}(\calA_n)\leq o(1)$, by making
\begin{align}
\s{Q} = O\p{\frac{n^3}{k^3\chi^2(p||q)}\log^3n}\label{eqn:countstat}
\end{align}
queries. Moreover, the queries can be non-adaptive.
\end{enumerate}
\end{theorem}
The proof of Theorem~\ref{thm:1} is given in Sections~\ref{sec:upper_bound} and \ref{sec:lowerBound}. Let us describe briefly the algorithms achieving the query complexities in the second and third items of Theorem~\ref{thm:1}. The test achieving the query complexity in the second item of Theorem~\ref{thm:1} is the \emph{scan test}. In the first step of this algorithm we subsample a set $\calS\subset[n]$ of $\s{M}\in\mathbb{N}$ elements, drawn uniformly at random, and then take all pairwise queries among those elements, resulting in $\s{Q}=\binom{\s{M}}{2}$ non-adaptive queries. Observe that at the end of this first step we, in fact, observe the subgraph $\s{G}_{\calS}$ of $\s{G}_n$ induced by $\calS$. Now, given the responses to those queries, in order to distinguish between hypotheses $\calH_0$ and $\calH_1$, we search for the densest subgraph (in the sense of the number of active edges) of a certain size in $\s{G}_{\calS}$, and then compare the result to a carefully chosen threshold. The test achieving the query complexity in the third item of Theorem~\ref{thm:1}, on the other hand, is the \emph{degree test} which, roughly speaking, counts the number of ``high degree" vertices (i.e., vertices with degrees exceeding a certain threshold) in a randomly chosen induced subgraph of $\s{G}_n$, and then compare the result to a certain threshold. A very similar variant of this degree test was proposed and analyzed in \cite{mardia2020finding}, for the planted clique problem. It is clear that the combinatorial scan test is computationally expensive (super-polynomial), while the degree test has a polynomial-time complexity. Interestingly, adaptivity is not needed in order to achieve the statistical barrier. It should be emphasized that the condition $k\geq(2+\epsilon_0)\frac{\log n}{d_{\s{KL}}(p||q)}$ in the second part of Theorem~\ref{thm:1} is essential because, otherwise, if $k<(2-\epsilon_0)\frac{\log n}{d_{\s{KL}}(p||q)}$ detection is known to be statistically impossible even if we observe/query the whole graph.

Note that the lower and upper bound for the non-adaptive case in \eqref{eqn:ITlimit} and \eqref{eqn:scanstat} are tight up to a constant factor; the former depend on $p$ and $q$ through the chi-square distance, while the latter through the KL divergence. As we show in the proof, an alternative condition for the scan test to succeed in detection is
\begin{align}
\s{Q} \geq (2+\epsilon)\cdot\frac{Cn^2}{k^2\chi^4(p||q)}\log^2 \frac{n}{k},
\end{align}
for some constant $C\geq 8$; the above condition meets the lower bound in \eqref{eqn:ITlimit} up to the constant factor $C$. We strongly believe that the source of this negligible gap is due to the lower bound, namely, the $\chi^4(p||q)$ factor in \eqref{eqn:ITlimit} can be in fact replaced with $d_{\s{KL}}^2(p||q)$. It should be emphasized, however, that in the special case of planted clique where $p=1$ and $q=1/2$, we have $\chi^4(p||q) = d_{\s{KL}}^2(p||q)=1$, and thus our bounds in \eqref{eqn:ITlimit} and \eqref{eqn:scanstat} are tight and fully characterize the statistical limit of detection. This closes the gap in \cite{racz2020finding}. 

As can be noticed from the second and third items of Theorem~\ref{thm:1}, there is a significant gap between the query complexity of the optimal algorithm and that of the computationally efficient one. This observation raises the following intriguing question: \emph{what is the sharp condition on $(n,k,p,q,\s{Q})$ under which the problem admits a computationally efficient test with vanishing risk, and conversely, without which no algorithm can detect the planted dense subgraph reliably in polynomial-time?} The gap observed in our problem is common to many contemporary problems in high-dimensional statistics studied over the last few years. Indeed, recently, there has been a success in developing a rigorous notion of what can and cannot be achieved by efficient algorithms. Specifically, a line of work initiated in \cite{berthet2013complexity} has aimed to explain these statistical-computational gaps by reducing from conjecturally hard average-case problems in computer science, most notably, the planted clique problem, conjectured to be computationally hard in the regime $k=o(\sqrt{n})$. Accordingly, such reductions from planted clique were established to prove tight statistical-computational gaps for a wide verity of detection and recovery problems, e.g., \cite{berthet2013complexity,ma2015computational,cai2015computational,hajek2015computational,wang2016average,wang2016statistical,gao2017sparse,brennan2018reducibility,Brennan2019UniversalityOC,wu2018statistical,brennan20a}.

As mentioned above, it is widely believed that the planted clique detection problem cannot be solved in randomized polynomial time when $k=o(\sqrt{n})$, which we shall refer to as the \emph{planted clique conjecture}, stated as follows. Below, we let $(\calH_0^{\s{PC}},\calH_1^{\s{PC}})$ denote the planted clique detection problem, and we recall that it is a special case of the $\s{PDS}$ detection problem with edge probabilities $p=1$ and $q=1/2$; for simplicity of notation we designate $\s{PC}(n,k)=\s{PDS}(n,k,1,1/2)$.
\begin{conjecture}[Planted clique conjecture]\label{conj:PC}
Suppose that $\{\calA_n\}$ is a sequence of randomized polynomial-time algorithms $\calA_n$ and $\{k_n\}$ is a sequence of positive integers
satisfying that $\limsup_{n\to\infty}\frac{\log k_n}{\log n}<1/2$. Then, if $\s{G}_n$ is an instance of $\s{PC}(n,k)$, it holds that
\begin{align}
\liminf_{n\to\infty}\pp{\pr_{\calH_0^{\s{PC}}}\p{\calA_n(\s{G}_n)=1}+\pr_{\calH_1^{\s{PC}}}\p{\calA_n(\s{G}_n)=0}}\geq 1.
\end{align}
\end{conjecture}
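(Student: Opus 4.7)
The statement is the famous \emph{Planted Clique Conjecture}, one of the longest-standing open problems at the interface of average-case complexity and statistical inference. To be honest up front: no unconditional proof is known, and producing one would be a breakthrough on the order of $\s{P}$ vs.\ $\s{NP}$ for average-case decision problems. A realistic ``proof plan'' therefore consists of sketching the two lines of attack the literature has actually made progress on, saying what each currently delivers, and identifying where the irreducible obstacle lies.

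The first line of attack I would pursue is \emph{restricted-model lower bounds} that capture broad classes of polynomial-time algorithms. The plan would be: (i) reduce the statement to showing that for $k=o(\sqrt{n})$ no low-degree polynomial in the entries of the adjacency matrix distinguishes $\calG(n,1/2)$ from $\calG(n,k,1,1/2)$, via a calculation of the low-degree likelihood ratio $\|L_n^{\leq D}\|_2^2 = 1+o(1)$ for $D$ up to $\s{polylog}(n)$, which is already known for planted clique; (ii) complement it with the classical Sum-of-Squares lower bound showing that degree-$\Theta(\log n)$ SoS fails to refute the planted clique at $k=o(\sqrt{n})$; (iii) add a Statistical Query lower bound of the type introduced by Feldman et al.\ to rule out the remaining natural algorithmic paradigm. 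Each ingredient is in the literature and independently corroborates the conjecture, but none, singly or jointly, proves it: they rule out specific classes of algorithms, not all randomized polynomial-time algorithms.

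The second line would be \emph{reduction-based}. I would seek a worst-case problem $\Pi$ (ideally $\s{NP}$-hard, or at least hard under $\s{ETH}$) together with an efficient randomized reduction mapping $\Pi$ to $\s{PC}(n,k)$ at $k=o(\sqrt{n})$ in a yes-preserving and no-preserving manner; a polynomial-time detector for $\s{PC}(n,k)$ would then yield a polynomial-time algorithm for $\Pi$. The design question is to exhibit a ``planting gadget'' that embeds a structured witness of $\Pi$ into a graph statistically indistinguishable from $\calG(n,k,1,1/2)$ in the yes-case and from $\calG(n,1/2)$ in the no-case. Every attempt in this direction has failed, and the obstruction is structural: the distributional invariance of $\calG(n,1/2)$ under vertex relabeling forces the reduction to wash out whatever worst-case structure it is trying to transport.

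The \textbf{main obstacle}, and the reason the statement remains a conjecture rather than a theorem, is that no unconditional super-polynomial lower bound is known for any natural average-case decision problem. A proof would require either (a) a breakthrough in circuit complexity delivering unconditional lower bounds against $\s{BPP}$ on structured distributional inputs, or (b) a worst-to-average-case reduction of a kind that existing barriers---vertex symmetry, the failure of natural self-reducibility, and the absence of an average-case analogue of Karp reductions---have consistently blocked. Absent one of these breakthroughs, the honest plan is to continue strengthening the restricted-model evidence and, as the present paper does, to deploy the conjecture as a \emph{hypothesis} underpinning conditional hardness results rather than as a theorem with an available proof.
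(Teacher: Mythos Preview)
Your assessment is correct, and there is nothing to compare against: the paper does not prove Conjecture~\ref{conj:PC} either. It is stated precisely as a \emph{conjecture}---an unproven hypothesis---and is used only as an assumption to delineate the ``computationally hard'' region of the phase diagram in Figure~\ref{fig:spcaphasediagram2}. Your discussion of why the statement remains open (restricted-model lower bounds, the absence of worst-to-average-case reductions, and the structural barriers to such reductions) is an accurate summary of the current state of knowledge, and your conclusion---that the conjecture is to be deployed as a hypothesis rather than proved---is exactly how the paper uses it.
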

Going back to our problem, it is clear that for $k = o(\sqrt{n})$, and any $1\leq\s{Q}\leq\binom{n}{2}$ solving $\s{QPDS}(n,k,\s{Q},p,q)$ is computationally hard, namely, there exists no randomized polynomial-time \emph{adaptive} algorithm that makes up to $\s{Q}$ edge queries and is able to solve the detection problem. Accordingly, Theorem~\ref{thm:1} and the planted clique conjecture give a partial phase diagram for when detection is statistically impossible, computational hard, and computational easy. The union of the last two regime is the statistically possible regime. Treating $k$ and $\s{Q}$ as polynomials in $n$, i.e., $\s{Q}=\Theta(n^\alpha)$ and $k = \Theta(n^\beta)$, for some $\alpha\in(0,2)$ and $\beta\in(0,1)$, we obtain the phase diagram in Fig.~\ref{fig:spcaphasediagram2}. Specifically,
\begin{enumerate}
    \item \emph{Computationally easy regime (blue region):} there is a polynomial-time algorithm for the detection task when $\alpha>3-3\beta$ and $\beta>1/2$.
    \item \emph{Computationally hard regime (red region):} there is an inefficient algorithm for detection when $\beta<1/2$ and $2-2\beta<\alpha$, but the problem is computationally hard (no polynomial-time algorithm exists) in the sense that it is at least as hard as solving the planted clique problem.
    \item \emph{Conjecturally hard regime (black region):} there is an inefficient algorithm for detection when $2-2\beta<\alpha<3-3\beta$, but we conjecture that there is no polynomial-time algorithm. This was also conjectured in \cite{mardia2020finding}.
    \item \emph{Statistically impossible regime:} the task is statistically/information-theoretically impossible when $\alpha<2-2\beta$.
\end{enumerate}

\begin{figure}[t!]
\centering

\begin{tikzpicture}[scale=1.5]
\tikzstyle{every node}=[font=\footnotesize]
\def\xmin{0}
\def\xmax{3.5}
\def\ymin{0}
\def\ymax{4.5}

\draw[->] (\xmin,\ymin) -- (\xmax,\ymin) node[right] {$\beta$};
\draw[->] (\xmin,\ymin) -- (\xmin,\ymax) node[above] {$\alpha$};

\node at (3, 0) [below] {$1$};
\node at (1.5, 0) [below] {$\frac{1}{2}$};
\node at (0, 2) [left] {$1$};
\node at (0, 4) [left] {$2$};
\node at (0, 0) [left] {$0$};
\node at (0, 3) [left] {$\frac{3}{2}$};

\filldraw[fill=gray!25, draw=black] (0, 0) -- (3, 0) -- (0, 4) -- (0, 0);
\filldraw[fill=blue!25, draw=black] (3, 0) -- (1.5, 3) -- (1.5, 4) -- (3, 4) -- (3, 0);
\filldraw[fill=red!25, draw=black] (1.5, 2) -- (0, 4) -- (1.5, 4) -- (1.5, 2);
\filldraw[fill=black!65, draw=black] (3, 0) -- (1.5, 2) -- (1.5, 3) -- (3, 0);
\node at (0.8, 1.6) {``$\s{\mathbf{Statistically}}$};
\node at (1.1, 1.3) {$\s{\mathbf{Impossible}}$"};
\node[rotate=-80] at (2.55, 2.5) {``$\mathbf{Computationally}\;\s{\mathbf{Easy}}$"};
\node[rotate=-60] at (0.95, 3.4) {``$\s{\mathbf{Hard}}$"};
\node[rotate=0] at (1.85, 1.9) {\textbf{?}};
\draw [dashed] (1.5,0) -- (1.5,2);
\draw [dashed] (0,3) -- (1.5,3);
\draw [dashed] (0,2) -- (1.5, 2);
\end{tikzpicture}

\caption{Phase diagram for detecting the presence of a planted dense subgraph, as a function of the dense subgraph size $k = \Theta(n^{\beta})$ and the number of non-adaptive edge queries $\s{Q}=\Theta(n^{\alpha})$.}
\label{fig:spcaphasediagram2}
\end{figure}
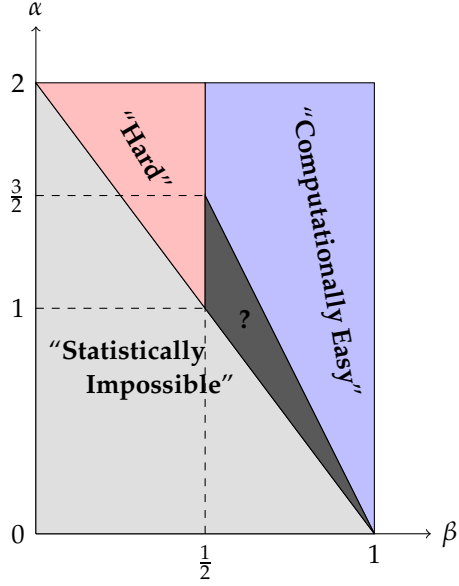

It turns out that our techniques allows for a more general submatrix detection problem with arbitrary planting and noise distribution $\calP$ and $\calQ$, respectively, defined as follows. 
\begin{definition}[General submatrix detection]\label{eqn:SD}
Given a pair of distributions $(\calP,\calQ)$ over a measurable space $(\calX,\mathcal{B})$, let $\s{SD}(n,k,\calP,\calQ)$ denote the hypothesis testing problem with observation $\s{X} \in \calX^{n \times n}$ and hypotheses
\begin{align}
\calH_0: \s{X} \sim \calQ^{\otimes n \times n}\quad\quad\s{vs.}\quad\quad\calH_1: \s{X}\sim\mathcal{D}(n,k,\calP, \calQ),\label{eqn:modelSD}
\end{align}
where $\mathcal{D}(n, k, \calP, \calQ)$ is the distribution of symmetric matrices $\s{X}$ with entries $\s{X}_{ij} \sim \calP$ if $i, j \in \calK$ and $\s{X}_{ij} \sim \calQ$ otherwise that are conditionally independent given $\calK$, which is chosen uniformly at random over all $k$-subsets of $[n]$.
\end{definition}
Now, consider the following natural generalization of the edge query oracle in Definition~\ref{def:oracleSD} to the above submatrix problem.
\begin{definition}[Oracle/Entries queries]\label{def:oracleSD}
Consider an $n\times n$ symmetric matrix $\s{X}\in\mathbb{R}^{n\times n}$. An oracle $\calO_{\s{sub}}:[n]\times[n]\to\mathbb{R}$, takes as input a pair of vertices $i,j\in[n]\times[n]$, and outputs $\calO_{\s{sub}}(i,j)=\s{X}_{ij}$ in response.
\end{definition}
Then, we define the query-submatrix detection problem as follows.
\begin{problem}[Query-submatrix detection problem]\label{def:SDSquery}
Consider the submatrix detection problem in Definition~\ref{eqn:SD}.  There is an oracle $\calO_{\s{sub}}$ as defined in Definition~\ref{def:oracleSD}. Find a set of queries $\mathbb{Q}\subseteq[n]\times[n]$ such that
$\s{Q}=|\mathbb{Q}|$, and from the oracle answers it is possible to solve the detection problem in \eqref{eqn:modelSD}. Henceforth, we denote this detection problem by $\s{QSD}(n,k,\calP,\calQ,\s{Q})$.
\end{problem}
It is clear that the $\s{PDS}$ model correspond to $\calP=\s{Bern}(p)$ and $\calQ = \s{Bern}(q)$, while $\s{X}$ is the graph adjacency matrix. Then, it can be shown that under mild conditions on the tails of the distribution $\calP$ and $\calQ$, Theorem~\ref{thm:1} holds for the setting in Definition~\ref{eqn:SD} with $\chi^2(p||q)$ and $d_{\s{KL}}(p||q)$ replaced by $\chi^2(\calP||\calQ)$ and $d_{\s{KL}}(\calP||\calQ)$, respectively. Specifically, the lower bound in Theorem~\ref{thm:1} holds whenever $0<\chi^2(\calP||\calQ)<\infty$, and the upper bounds hold if, for example, the log-likelihood ratio $\calL\triangleq\log\frac{\mathrm{d}\calP}{\mathrm{d}\calQ}$ is sub-Gaussian, or, even slightly weaker; if there is a constant $C \ge 1$ such that
\begin{subequations}
\begin{align}
\psi_{\calP}(\lambda) - d_{\s{KL}}( \calP \| \calQ) \cdot \lambda &\le C \cdot d_{\s{KL}}( \calP \| \calQ) \cdot \lambda^2 \quad \forall \lambda \in [-1, 0], \\
\psi_{\calQ}(\lambda) + d_{\s{KL}}( \calQ \| \calP) \cdot \lambda &\le C \cdot d_{\s{KL}}( \calQ \| \calP) \cdot \lambda^2 \quad 
\forall \lambda \in [-1, 1],
\end{align}\label{eqn:PDSGenassumption}
\end{subequations}
where $\psi_{\calQ}\triangleq\log\bE_{\calQ}[\exp(\lambda\calL)]$ and $\psi_{\calP}\triangleq\log\bE_{\calP}[\exp(\lambda\calL)]$. Note that the above assumption was considered also in \cite{Hajek17,Brennan2019UniversalityOC}, and arise naturally from classical binary hypothesis testing, in order to control the tails of the error probabilities associated with the simple tests we propose. 

\section{Algorithms and Upper Bounds}\label{sec:upper_bound}

In this section we prove items 2 and 3 in Theorem~\ref{thm:1}. To that end, we propose two algorithms whose performance match \eqref{eqn:scanstat}--\eqref{eqn:countstat}. Below, we denote the adjacency matrix of the underlying graph $\s{G}_n$ by $\s{A}\in\ppp{0,1}^{n\times n}$, with its $(i,j)$ entry denoted by $\s{A}_{ij}$, for any $1\leq i,j\leq n$.

\subsection{Scan Test}\label{subsec:scan}

In this subsection we analyzed the scan test in Algorithm~\ref{algo:0}. The parameters $\s{M}$, $\s{N}_0$, and $\tau_{\s{scan}}$ in Algorithm~\ref{algo:0} will be specified below. In the first step of the scan test we subsample a set $\calS\subseteq[n]$ of $\s{M}\in\mathbb{N}$ elements, drawn uniformly at random, and take all pairwise queries among these elements. Therefore, the number of queries is $\s{Q}=\binom{\s{M}}{2}$. Given these queries we, in fact, learn the induced subgraph, denoted by $\s{G}_{\calS}$, on the set of vertices $\calS$. Then, using this subgraph, we wish to distinguish between $\calH_0$ and $\calH_1$. Recall that $\calK$ denotes the set of vertices over which the densest subgraph was planted under $\calH_1$, and let $\s{N}$ denote the number of planted dense subgraph vertices in $\calS$, i.e., $\s{N} \triangleq |\calK\cap\calS|$. Since $\calK$ and $\calS$ are drawn uniformly at random from all sets of size $k$ and $\s{M}$ from $[n]$, respectively, we observe that $\s{N}\sim\s{Hypergeometric}(n,k,\s{M})$, namely, $\s{N}$ has a Hypergeometric distribution with parameters $n$, $k$, and $\s{M}$. Accordingly, we have that $\bE(\s{N}) = \frac{k\s{M}}{n}$, and $\s{var}(\s{N})\leq \frac{k\s{M}}{n}\cdot(1-k/n)$. Therefore, Chebyshev's inequality implies that
\begin{align}
\pr\pp{\s{N}\leq(1-\epsilon)\bE(\s{N})}\leq\frac{1}{\epsilon^2\bE(\s{N})}.\label{eqn:sNconcen}
\end{align}
\begin{algorithm}[t]
\caption{\texttt{Scan Test}\label{algo:0}}
\footnotesize
\begin{algorithmic}[1]
\REQUIRE $\s{G}_n$, $\s{Q}=\binom{\s{M}}{2}$, $\s{N}_0 = (1-\epsilon)\frac{k\s{M}}{n}$, and $\tau_{\s{scan}}=\binom{\s{N}_0}{2}\cdot\gamma$, for $\epsilon\in(0,1)$ and $\gamma\in[q,p]$.
\STATE Subsample a set $\calS$ of $\s{M}$ elements drawn uniformly at random from $[n]$.
\STATE Take all pairwise queries among the elements in $\calS$, and obtain $\s{A}_{ij}$, for all $i,j\in\calS$.
\STATE Compute 
$$
\s{S_{scan}}\triangleq\max_{\calL\subset\calS:|\calL|=\s{N}_0}\sum_{i<j\in\calL}\s{A}_{ij}.
$$
\STATE If $\s{S_{scan}}>\tau_{\s{scan}}$ decide $\calH_1$; otherwise, decide $\calH_0$. 
\end{algorithmic}
\end{algorithm}
Thus, provided that $\epsilon^2\bE(\s{N})\to\infty$, we see that with probability tending to unity $\s{N}\geq(1-\epsilon)\frac{k\s{M}}{n}\triangleq\s{N}_0$. The implication of this is the following: as mentioned above we are tasked with the following detection problem:
\begin{align}
    \calH_0': \s{G}_{\calS}\sim\calG(\s{M},q)\quad\quad\s{vs.}\quad\quad\calH_1': \s{G}_{\calS}\sim\calG(\s{M},\s{N},p,q).\label{eqn:model_reduced}
\end{align}
Therefore, \eqref{eqn:model_reduced} represents a $\s{PDS}(\s{M},\s{N},p,q)$ detection problem, but the size of the planted densest subgraph is random. Nonetheless, due to \eqref{eqn:sNconcen}, it should be clear that by replacing $\s{N}$ with $\s{N}_0$ in \eqref{eqn:model_reduced}, the detection problem becomes algorithmically harder; thus upper bounds on $\s{PDS}(\s{M},\s{N}_0,p,q)$ imply corresponding upper bounds on $\s{PDS}(\s{M},\s{N},p,q)$. Below, we prove this rigorously.

Recall that $\s{A}$ is the adjacency matrix of $\s{G}_n$. The scan test is defined as follows
\begin{align}
\calA_{\s{scan}}(\s{A}_{\calS})\triangleq\Ind\ppp{\max_{\calL\subset\calS:|\calL|=\s{N}_0}\sum_{i<j\in\calL}\s{A}_{ij}\geq\tau_{\s{scan}}},
\end{align}
where $\tau_{\s{scan}}\triangleq\binom{\s{N}_0}{2}\cdot\gamma$, for some $\gamma\in[q,p]$. Under the null hypothesis, for any fixed subset $\calL$ of size $\s{N}_0$, it is clear that $\sum_{i<j\in\calL}\s{A}_{ij}\sim\s{Binomial}\p{\binom{\s{N}_0}{2},q}$.
By the union bound and Chernoff's inequality,
\begin{align}
\pr_{\calH_0'}\p{\calA_{\s{scan}}(\s{A}_{\calS})=1} &= \pr_{\calH_0'}\pp{\max_{\calL\subset\calS:|\calL|=\s{N}_0}\sum_{i<j\in\calL}\s{A}_{ij}\geq\tau_{\s{scan}}}\\
&\leq \sum_{\calL\subset\calS:|\calL|=\s{N}_0}\pr_{\calH_0'}\pp{\sum_{i<j\in\calL}\s{A}_{ij}\geq\tau_{\s{scan}}}\\
&\leq \binom{\s{M}}{\s{N}_0}\cdot\pr\pp{\s{Binomial}\p{\binom{\s{N}_0}{2},q}\geq\tau_{\s{scan}}}\\
&\leq \p{\frac{e\s{M}}{\s{N}_0}}^{\s{N}_0}\exp\p{-\binom{\s{N}_0}{2}d_{\s{KL}}(\gamma||q)}\\
& =\exp\p{\s{N}_0\log\frac{e\s{M}}{\s{N}_0}-\binom{\s{N}_0}{2}d_{\s{KL}}(\gamma||q)}.\label{eqn:Type1Scan_Chernoff}
\end{align}
Under the alternative hypothesis, conditioned on $\s{N}=\s{N}'$, for some $\s{N}'\geq \s{N}_0$, the scan test statistics $\max_{\calL\subset\calS:|\calL|=\s{N}_0}\sum_{i<j\in\calL}\s{A}_{ij}$ stochastically dominates $\s{Binomial}\p{\binom{\s{N}_0}{2},p}$. By the multiplicative Chernoff's bound,
\begin{align}
\pr_{\calH_1'}\p{\calA_{\s{scan}}(\s{A}_{\calS})=0} &= \pr_{\calH_1'}\pp{\max_{\calL\subset\calS:|\calL|=\s{N}_0}\sum_{i<j\in\calL}\s{A}_{ij}\leq\tau_{\s{scan}}}\\
&\leq \pr\p{\s{N}\leq \s{N}_0}+\pr\pp{\s{Binomial}\p{\binom{ \s{N}_0}{2},p}\leq\tau_{\s{scan}}}\\
&\leq \frac{1}{\epsilon^2\bE\s{N}_0}+\exp\p{-\frac{\binom{\s{N}_0}{2}p}{2}\p{1-\frac{\gamma}{p}}^2},\label{eqn:Type2Scan_Chernoff}
\end{align}
which clearly goes to $0$ since $\s{N}_0\to\infty$. Therefore, combining \eqref{eqn:Type1Scan_Chernoff} and \eqref{eqn:Type2Scan_Chernoff} we see that $\s{R}(\calA_{\s{scan}})\leq\delta$, if
\begin{align}
    d_{\s{KL}}(\gamma||q)>\frac{2}{\s{N}_0}\log\frac{\s{M}}{\s{N}_0}+\frac{2}{\s{N}_0^2}\log\frac{1}{\delta} \geq \frac{\sqrt{2}n}{\sqrt{\s{Q}}k}\log\frac{n}{k}+\frac{n^2}{\s{Q}k^2}\log\frac{1}{\delta}.
\end{align}
By taking $\gamma$ arbitrary close to $p$, we get the query complexity bound in \eqref{eqn:scanstat}, for arbitrary $\epsilon>0$. Also, note that the condition $k\geq(2+\epsilon_0)\frac{\log n}{d_{\s{KL}}(p||q)}$, for some $\epsilon_0>0$, in the second part of Theorem~\ref{thm:1} follows from the constraint that $\s{M}\leq n$. As we mentioned right after Theorem~\ref{thm:1}, a weaker bound exhibiting a dependency on the chi-square distance can be derived. Specifically, let $\tau_{\s{scan}} = \binom{\s{N}_0}{2}\frac{p+q}{2}$. By the union bound and Bernstein's inequality,
\begin{align}
\pr_{\calH_0'}\p{\calA_{\s{scan}}(\s{A}_{\calS})=1} &= \pr_{\calH_0'}\pp{\max_{\calL\subset\calS:|\calL|=\s{N}_0}\sum_{i<j\in\calL}\s{A}_{ij}\geq\tau_{\s{scan}}}\\
&\leq \sum_{\calL\subset\calS:|\calL|=\s{N}_0}\pr_{\calH_0'}\pp{\sum_{i<j\in\calL}\s{A}_{ij}\geq\tau_{\s{scan}}}\\
&\leq \binom{\s{M}}{\s{N}_0}\cdot\pr\pp{\s{Binomial}\p{\binom{\s{N}_0}{2},q}\geq\tau_{\s{scan}}}\\
&\leq \p{\frac{e\s{M}}{\s{N}_0}}^{\s{N}_0}\exp\p{-\frac{\binom{\s{N}_0}{2}^2(p-q)^2/4}{2\binom{\s{N}_0}{2}q+\binom{\s{N}_0}{2}(p-q)/3}}\\
& =\exp\p{\s{N}_0\log\frac{e\s{M}}{\s{N}_0}-C\s{N}_0^2\frac{(p-q)^2}{q(1-q)}},\label{eqn:Type1Scan}
\end{align}
for some constant $C>8$, and note that $\frac{(p-q)^2}{q(1-q)}=\chi^2(p||q)$. Similarly to \eqref{eqn:Type2Scan_Chernoff}, under the alternative hypothesis, conditioned on $\s{N}=\s{N}'$, for some $\s{N}'\geq \s{N}_0$, the scan test statistics $\max_{\calL\subset\calS:|\calL|=\s{N}_0}\sum_{i<j\in\calL}\s{A}_{ij}$ stochastically dominates $\s{Binomial}\p{\binom{\s{N}_0}{2},p}$. By the multiplicative Chernoff's bound,
\begin{align}
\pr_{\calH_1'}\p{\calA_{\s{scan}}(\s{A}_{\calS})=0} &= \pr_{\calH_1'}\pp{\max_{\calL\subset\calS:|\calL|=\s{N}_0}\sum_{i<j\in\calL}\s{A}_{ij}\leq\tau_{\s{scan}}}\\
&\leq \pr\p{\s{N}\leq \s{N}_0}+\pr\pp{\s{Binomial}\p{\binom{ \s{N}_0}{2},p}\leq\tau_{\s{scan}}}\\
&\leq \frac{1}{\epsilon^2\bE\s{N}_0}+\exp\p{-\frac{\p{2\binom{\s{N}_0}{2}-\binom{\s{N}_0}{2}}^2(p-q)^2}{8\binom{\s{N}_0}{2}p}}\\
& =\frac{1}{\epsilon^2\bE\s{N}_0}+\exp\p{-C\s{N}_0^2q},\label{eqn:Type2Scan}
\end{align}
which converges to $0$. Therefore, combining \eqref{eqn:Type1Scan} and \eqref{eqn:Type2Scan} we see that $\s{R}(\calA_{\s{scan}})\leq\delta$, if
\begin{align}
    \chi^2(p||q)\geq\Omega\p{\frac{1}{\s{N}_0}\log\frac{\s{M}}{\s{N}_0}+\frac{1}{\s{N}_0^2}\log\frac{2}{\delta}} = \Omega\p{\frac{n}{\sqrt{\s{Q}}k}\log\frac{n}{k}+\frac{n^2}{\s{Q}k^2}\log\frac{2}{\delta}}.
\end{align}


\subsection{Degree Test}

In this subsection we analyze the degree test in Algorithm~\ref{algo:1}. The parameters $n'$, $\s{M}$, and $\s{N}_0$ in Algorithm~\ref{algo:1} will be specified below. Specifically, let $\calS$ denote a set of $n'\in\mathbb{N}$ vertices drawn uniformly at random from $[n]$, and $\s{G}_{\calS}$ be the subgraph of $\calG$ induced by this set. Then, we subsample $\s{M}\in\mathbb{N}$ elements from $\calS$, and denote those set of elements by $\calM$.
\begin{algorithm}[t]
\caption{\texttt{Degree Test}\label{algo:1}}
\footnotesize
\begin{algorithmic}[1]
\REQUIRE $\s{G}_n$, $n'=\Omega\p{\frac{n^2\log n}{k^2\chi^2(p||q)}}$, $\s{M}=\Omega\p{\frac{n}{k}\log^2 n}$, and $\s{N}_0=(1-\epsilon)\frac{kn'}{n}$, for $\epsilon\in(0,1)$.
\STATE Let $\calS$ denote a set of $n'$ vertices drawn uniformly at random from $[n]$, and $\s{G}_{\calS}$ be the subgraph of $\calG$ induced by $\calS$. 
\STATE Subsample $\s{M}$ elements uniformly at random from $\calS$, and denote those set of elements by $\calM$.
\STATE Compute 
$$
\s{C_{deg}}\triangleq\sum_{i\in\calM}\Ind\pp{\sum_{j\in\calS}\s{A}_{ij}>n'q+\frac{\s{N}_0(p-q)}{2}}.
$$
\STATE If $\s{C_{deg}}>2\log n'$ decide $\calH_1$; otherwise, decide $\calH_0$. 
\end{algorithmic}
\end{algorithm}
Our test is defined as follows, 
\begin{align}
\calA_{\s{deg}}(\s{A}_{\calS})\triangleq\Ind\ppp{\sum_{i\in\calM}\Ind\pp{\sum_{j\in\calS}\s{A}_{ij}>\tau_{\s{deg}}}\geq 2\log n'},
\end{align}
where $\tau_{\s{deg}}\triangleq n'q+\frac{\s{N}_0(p-q)}{2}$, and $\s{N}_0\in\mathbb{N}$. As in the previous subsection, under $\calH_1$, let $\s{N}$ denote the number of planted dense subgraph vertices in $\calS$, i.e., $\s{N} \triangleq |\calK\cap\calS|$, and note that $\s{N}\sim\s{Hypergeometric}(n,k,n')$. We have shown that with probability tending to unity $\s{N}\geq(1-\epsilon)\frac{kn'}{n}\triangleq\s{N}_0$. Let this event be denoted by $\calC_1$, and so $\pr(\calC_1)\to1$. Next, we analyze the $\s{Type}$-$\s{I}$+$\s{II}$ error probability associated with the above test. Let us start with the null hypothesis. For any $i\in\calM$, we let $\s{X}_i\triangleq\Ind\pp{\sum_{j\in\calS}\s{A}_{ij}>n'q+\frac{\s{N}_0(p-q)}{2}}\sim\s{Bern}\p{\pr\pp{\sum_{j\in\calS}\s{A}_{ij}>n'q+\frac{\s{N}_0(p-q)}{2}}}$. Under $\calH_0$, it is clear that $\sum_{j\in\calS}\s{A}_{ij}\sim\s{Binom}\p{n',q}$, and thus by Bernstein’s inequality
\begin{align}
   \pr_{\calH_0}\pp{\sum_{j\in\calS}\s{A}_{ij}>n'q+\frac{\s{N}_0(p-q)}{2}} \leq \exp\pp{-\Omega\p{\frac{\s{N}_0^2}{n'}\chi^2(p||q)}}.
\end{align}
Accordingly, if $n'< \frac{\s{N}_0^2\chi^2(p||q)}{C\log n}$, which implies that $n'>\Omega\p{\frac{n^2\log n}{k^2\chi^2(p||q)}}$ we will have $\pr_{\calH_0}\pp{\sum_{j\in\calS}\s{A}_{ij}>n'q+\frac{\s{N}_0(p-q)}{2}} \leq n^{-2}$, for some $C>0$. This implies that under $\calH_0$, each $\s{X}_i$ is stochastically dominated by $\s{Bern}(n^{-2})$. Consider now the alternative hypothesis. Recall that conditioned on $\calC_1$, over the induced subgraph $\s{G}_{\calS}$ with $n'$ vertices there are at least $\s{N}_0$ vertices from the planted set. Next, we show that by subsampling $\s{M}$ vertices $\calM$ from $\s{G}_{\calS}$ (as in the second step of Algorithm~\ref{algo:1}), with high probability among those $\s{M}$ samples at least $3\log n'$ vertices fall inside the planted set. To that end, let us divide the subsampled planted vertices $\calK\cap\calS$ into $3\log n'$ disjoint sets $\{\calS_i\}_{i=1}^{3\log n'}$ of equal size $\frac{\s{N}_0}{3\log n'}$. Let $\calE$ denote the event that there exist a set $\calS_i$ which do not intersect $\calM$, namely, $\calE\triangleq\bigcup_{i=1}^{3\log n'}\ppp{\calS_i\cap\calM=\emptyset}$. Then, note that
\begin{align}
    \pr\pp{\calE}&\leq (3\log n')\cdot\pr\pp{\calS_1\cap\calM=\emptyset}\\
    &= (3\log n')\cdot\pp{1-\frac{\s{N}_0}{3n'\log n'}}^{\s{M}}\\
    &\leq (3\log n')\cdot \exp\p{-\frac{\s{N}_0\s{M}}{3n'\log n'}},\label{eqn:donotinter}
\end{align}
namely, the probability that there exists a set $\calS_i$ which do not intersect $\calM$ is at most \eqref{eqn:donotinter}. Thus, taking $\s{M}\geq \frac{6n'\log^2 n'}{\s{N}_0}$, i.e., $\s{M} = \Omega\p{\frac{n}{k}\log^2 n}$ we obtain that $ \pr\pp{\calE}\leq 3\log n'/n'^2$, and accordingly, with probability tending to 1, we sample at least one elements from each $\calS_i$. Therefore, among the $\s{M}$ samples in $\calM$ at least $3\log n'$ vertices fall inside the planted set. Now, for any $i\in\calM\cap(\calK\cap\calS)$, we note that $\sum_{j\in\calS}\s{A}_{ij}$ stochastically dominates $\s{Binom}\p{\s{N}_0,p}+\s{Binom}\p{n-\s{N}_0,q}$, and thus conditioned on $\calE^c$ and $\calC_1$, by the multiplicative Chernoff's bound,
\begin{align}
   \pr_{\calH_1}\pp{\sum_{j\in\calS}\s{A}_{ij}\leq n'q+\frac{\s{N}_0(p-q)}{2}} \leq \exp\pp{-\Omega\p{\frac{\s{N}_0^2}{n'}\chi^2(p||q)}}.
\end{align}
Accordingly, if $n'< \frac{\s{N}_0^2\chi^2(p||q)}{C\log n}$, we will have $\pr_{\calH_1}\pp{\sum_{j\in\calS}\s{A}_{ij}\leq n'q+\frac{\s{N}_0(p-q)}{2}} \leq n^{-2}$, for large enough $C>0$. This implies that under $\calH_1$, conditioned on $\calE^c$ and $\calC$ each $\s{X}_i$ stochastically dominates $\s{Bern}(1-n^{-2})$, for $i\in\calM\cap(\calK\cap\calS)$. Establishing the above results, we are in a position to upper bound the $\s{Type}$-$\s{I}$ and $\s{Type}$-$\s{II}$ error probabilities. Using Markov's inequality, we have
\begin{align}
    \pr_{\calH_0}\p{\calA_{\s{deg}}(\s{A}_{\calS})=1} &= \pr_{\calH_0}\pp{\sum_{i\in\calM}\s{X}_i>2\log n'} \\
    &\leq \pr\pp{\s{Binom}\p{\s{M},n^{-2}}>2\log n'}\\
    &\leq \frac{\s{M}}{2n^2\log n'},
\end{align}
which clearly goes to 0 as $n\to\infty$. On the other hand,
\begin{align}
    \pr_{\calH_1}\p{\calA_{\s{deg}}(\s{A}_{\calS})=0} &= \pr_{\calH_1}\pp{\sum_{i\in\calM}\s{X}_i<2\log n'} \\
    &\leq \pr\pp{\s{Binom}\p{3\log n',1-n^{-2}}<2\log n'}+\pr\p{\calE}+\pr\p{\calC^c}\\
    &\leq e^{-C\log n'}+o(1)\to0,
\end{align}
as $n\to\infty$. Finally, note that the number of queries made by Algorithm~\ref{algo:1} is $\s{Q}=\s{M}\cdot n'>\Omega\p{\frac{n^3}{k^3\chi^2(p||q)}\log^3n}$ and the constraint that $n'<n$ implies that $k\geq \Omega(\sqrt{n\log n/\chi^2(p||q)})$. This concludes the proof.

\section{Statistical Lower Bound}\label{sec:lowerBound}

In this section, we prove the lower bounds in Theorem~\ref{thm:1}, for non-adaptive mechanisms. Our proof consists of two main steps. In the first step, we bound the total number of planted edges any non-adaptive mechanism $\mathscr{Q}_n$ can query. We denote this number of planted queries by $\mathscr{C}(\mathscr{Q}_n)$. Given $\mathscr{Q}_n$, in the second step of the proof, we analyze the resulting detection problem a test $\calA_n$ is faced with, and derive a lower bound on its risk. Note that in order to prove that detection is statistically impossible whenever \eqref{eqn:ITlimit} holds, it is suffice to prove impossibility on the boundary, i.e., when $\s{Q} = \s{Q}^\star\triangleq(2-\epsilon)\cdot\frac{n^2}{k^2\chi^4(p||q)}\log^2 n$, for non-adaptive mechanisms. Indeed, if detection is impossible when $\s{Q} = \s{Q}^\star$, then with less queries $\s{Q}<\s{Q}^\star$ detection remains impossible.

\subsection{Upper Bound on the Query-Number of Planted Edges}

In this subsection, we bound the total number of planted edges any mechanism can query. As mentioned above, this number is denoted by $\mathscr{C}(\mathscr{Q}_n)$ for a given query mechanism $\mathscr{Q}_n$. Letting $\mathbb{Q}$ denote the query trajectory (i.e., the set of $\s{Q}$ queried edges), we note that $\mathscr{C}(\mathscr{Q}_n) = \sum_{\ell=1}^{\s{Q}}\Ind\ppp{\mathbb{Q}_\ell\in\calK\times\calK}$. After making $\s{Q}$ queries, a testing algorithm produces an decision. Specifically, given a trajectory $\mathbb{Q}$ (or, a mechanism $\mathscr{Q}_n$), we denote by $(\calH_0',\calH_1')$ the new hypothesis testing problem faced by a testing procedure: under $\calH_0'$ we observe a subgraph over $\s{Q}$ with $\s{Bern}(q)$ independent edges, while under $\calH_1'$ we observe a subgraph over $\s{Q}$ edges, where there exists a set of $\mathscr{C}$ edges that belong to the planted densest subgraph over $\calK$ (i.e., $\s{Bern}(p)$ random edges), and the remaining edges are independent $\s{Bern}(q)$ random variables. We denote the respective null and alternative distributions by $\pr_{\calH_0'}$ and $\pr_{\calH_1'}$, respectively. We have the following result. 
\begin{lemma}[Total number of planted edge queries]\label{lem:boundTotal}
Assume that $\s{Q}$ satisfies the condition in the first item of Theorem~\ref{thm:1}, and fix $\delta\in(0,1)$. Let $\mathscr{Q}_n$ be any algorithm that makes at most $\s{Q}$ non-adaptive edge queries. Then, with probability at least $1-\delta$,
    \begin{align}
       \mathscr{C}(\mathscr{Q}_n)\leq \s{Q}\frac{k^2}{n^2}\p{1+\frac{1}{\sqrt{\delta}}\frac{n}{k\sqrt{Q}}}.\label{eqn:LemmaProbLower}
    \end{align}
\end{lemma}

\begin{rmk} Note that on the boundary $\s{Q} = \s{Q}^\star$, we have $\frac{n}{k\sqrt{Q}}=o(1)$, as $n\to\infty$. Therefore, \eqref{eqn:LemmaProbLower} reduce to $\mathscr{C}(\mathscr{Q}_n)\leq \s{Q}\frac{k^2}{n^2}\p{1+o(1)}$.

\end{rmk}

\begin{proof}[Proof of Lemma~\ref{lem:boundTotal}]

We will start by proving Lemma~\ref{lem:boundTotal} for deterministic query mechanisms, and then, we address the more general case of randomized algorithms. Since queries are made upfront, they are statistically independent of $\s{G}_n$.
Accordingly, let $\s{X}_e$, for $e\in\mathbb{Q}$, denote an indicator random variable such that $\s{X}_e = 1$ if $e\in\calK\times\calK$, and zero otherwise. Then,
\begin{align}
\mathscr{C}(\mathscr{Q}_n) = \sum_{e\in\mathbb{Q}}\s{X}_e.  
\end{align}
It is clear that $\pr(\mathbb{Q}_\ell\in\calK\times\calK) = \frac{\binom{k}{2}}{\binom{n}{2}} = \frac{k(k-1)}{n(n-1)}\leq\frac{k^2}{n^2}$, and thus,
\begin{align}
\bE \mathscr{C}(\mathscr{Q}_n) = \frac{\binom{k}{2}}{\binom{n}{2}}\s{Q} = \frac{k(k-1)}{n(n-1)}\s{Q}\triangleq \bar{\s{L}},
\end{align}
and furthermore,
\begin{align}
\bE[\mathscr{C}(\mathscr{Q}_n)]^2 &= \bE\sum_{e,e'\in\mathbb{Q}}\s{X}_e\s{X}_{e'}  \\
&= \bE\sum_{e\in\mathbb{Q}}\s{X}_e+\bE\sum_{e\neq e'\in\mathbb{Q}}\s{X}_e\s{X}_{e'}\\
& = \frac{k(k-1)}{n(n-1)}\s{Q}+\frac{k(k-1)}{n(n-1)}\sum_{e\neq e'\in\mathbb{Q}}\pr\pp{e'\in\calK\times\calK\vert e\in\calK\times\calK},\label{eqn:conditionalProb}
\end{align}
where the second equality follows the fact that $\s{X}_e^2 = \s{X}_e$, and in the last equality we note that $\bE\s{X}_e=\pr(e\in\calK\times\calK) = \frac{\binom{k}{2}}{\binom{n}{2}}$. Fix a pair $e = (i^e_1,j^e_1)$, and consider a pair $e = (i^{e'}_1,j^{e'}_1)$. We decompose the summation in \eqref{eqn:conditionalProb} into two parts. In the first part, the edges $e$ and $e'$ are completely disjoint, namely, the do not share a common vertex. We denote this by $e\perp e'$. In this case, $\pr\pp{e'\in\calK\times\calK\vert e\in\calK\times\calK} = \frac{\binom{k-2}{2}}{\binom{n-2}{2}} = \frac{(k-2)(k-3)}{(n-2)(n-3)}$. On the other hand, if $e$ and $e'$ share exactly one common vertex, then, $\pr\pp{e'\in\calK\times\calK\vert e\in\calK\times\calK} = \frac{\binom{k-1}{2}}{\binom{n-1}{2}} = \frac{(k-1)(k-2)}{(n-1)(n-2)}$. It is not difficult to show that $\frac{(k-2)(k-3)}{(n-2)(n-3)}\leq \frac{k(k-1)}{n(n-1)}$ and $\frac{(k-1)(k-2)}{(n-1)(n-2)}\leq \frac{k(k-1)}{n(n-1)}$, for $k<n$, and $n=\omega(1)$. Therefore, 
\begin{align}
\sum_{e\neq e'\in\mathbb{Q}}\pr\pp{e'\in\calK\times\calK\vert e\in\calK\times\calK}\leq \frac{k^2(k-1)^2}{n^2(n-1)^2}\s{Q}^2,
\end{align}
and thus,
\begin{align}
\bE[\mathscr{C}(\mathscr{Q}_n)]^2 &\leq  \frac{k(k-1)}{n(n-1)}\s{Q}+ \frac{k^2(k-1)^2}{n^2(n-1)^2}\s{Q}^2.\label{eqn:conditionalProb2}
\end{align}
Using the above we finally get that,
\begin{align}
\s{Var}\p{\mathscr{C}(\mathscr{Q}_n}&\leq \frac{k(k-1)}{n(n-1)}\s{Q}+ \frac{k^2(k-1)^2}{n^2(n-1)^2}\s{Q}^2-\frac{k^2(k-1)^2}{n^2(n-1)^2}\s{Q}^2\\
& = \frac{k(k-1)}{n(n-1)}\s{Q} = \bar{\s{L}}.
\end{align}
Chebyshev's inequality implies that, for any $\epsilon>0$,
\begin{align}
\pr[\mathscr{C}(\mathscr{Q}_n)\geq(1+\epsilon)\bar{\s{L}}]\leq\frac{1}{\epsilon^2\bar{\s{L}}}.\label{eqn:LbarUpper}
\end{align}
Thus, with probability at least $1-\delta$,
\begin{align}
    \mathscr{C}(\mathscr{Q}_n)\leq \frac{k^2}{n^2}\p{1+\frac{1}{\sqrt{\delta}}\frac{n}{k\sqrt{Q}}}.
\end{align}
Finally, for randomized query mechanisms, we can condition first on the additional source of randomness $\calR$, and apply our arguments above for deterministic query mechanisms, to prove \eqref{eqn:LemmaProbLower}, independently of the realization of $\calR$. Then, by taking an expectation over $\calR$ the proof is concluded.
\end{proof}

\subsection{Hypothesis Test Over the Queried Subgraph}

Provided with the $\s{Q}$ edge queries probed by an arbitrary non-adaptive query mechanism $\mathscr{Q}_n$, we now describe the hypothesis testing problem the detection algorithm $\calA_n$ is faced with, and derive a statistical lower bound on its performance. Recall that the overall distinguishing algorithm is a decomposition $\calA_n\circ\mathscr{Q}_n$. 
While we do not specify the distribution of $\mathscr{C}$, we derived in the previous subsection a high probability upper bound on it. Below, we denote the data that is being supplied to the detection algorithm by a vector $\mathbf{X}\in\{0,1\}^{\s{Q}}$ of length $\s{Q}$, with each entry representing a queried edge. While it makes more sense to represent this data using a matrix (which in turn is a sub-matrix of the original adjacency matrix), we find it more convenient to work with the above vector notation. 

With some abuse of notation, let $(\calH_0',\calH_1')$ denote the hypothesis testing problem faced by $\calA_n$, followed by the \emph{best} edge query mechanism. Specifically, let $\calP \triangleq \s{Bern}(p)$ and $\calQ \triangleq \s{Bern}(q)$. Under $\calH_0'$ it is clear that $\mathbf{X}\sim\pr_{\calH_0'}$ where $\pr_{\calH_0'}\triangleq \calQ^{\otimes \s{Q}}$ is the distribution of a product of $\s{Q}$ Bernoulli random variables $\s{Bern}(q)$. Under $\calH_1'$, the situation is a bit more complicated; we let $\mathbf{X}\sim\pr_{\calH_1'}$, and the distribution $\pr_{\calH_1'}$ is defined as follows. The query trajectory $\mathbb{Q}$ completely determines how the query mechanism operates \emph{for any} realization of $\calK$. Accordingly, conditioned on $\calK$, we let $\calW_{\calK\times\calK}$ denote the set of edge queries that fall inside the planted set $\calK\times\calK$, associated with the best query mechanism. Then, the alternative distribution is constructed as follows:
\begin{enumerate}
    \item We pick $k$ vertices uniformly at random from $[n]$, and denote the obtained set by $\calK$ (this is the original set of vertices over which the subgraph was planted).
    \item Any two vertices in $\calK$ are connected with probability $p$.
    \item Let $\calW_{\calK\times\calK}\subset\calK\times\calK$ be the set of queried edges that fall inside $\calK\times\calK$, and denote its size by $|\calW_{\calK\times\calK}|=\mathscr{C}$.
    \item The elements of $\mathbf{X}$ are the edges in $\calW_{\calK\times\calK}$, and outside this set the edges/elements are drawn i.i.d. $\s{Bern}(q)$.   
\end{enumerate}
Note that due to the result in the previous section, such a set $\calW_{\calK\times\calK}$ of random size $\mathscr{C}$ exists, and that the above hypothesis testing problem is \emph{simple}. Below, we let $\s{Unif}_{n,k}$ be the uniform measure over sets of size $k$ drawn u.a.r. from $[n]$. 
\allowdisplaybreaks
We would like to derive a lower bound on the above testing problem. Specifically, let $\s{R}(\calA_n)$ denote the risk of $\calA_n$, i.e., the sum of the $\s{Type}$-$\s{I}$ and $\s{Type}$-$\s{II}$ error probabilities associated with the detection algorithm $\calA_n$. We start by using the following standard lower bound on the risk $\s{R}(\calA_n)$ of any algorithm \cite[Theorem 2.2]{Tsybakov}, 
\begin{align}
\s{R}(\calA_n)\geq 1 - \s{TV}(\pr_{\calH_0'},\pr_{\calH_1'}).
\end{align}
Next, recall that in the previous subsection we have proved that $\pr(\mathscr{C}>\s{L}^\star)\leq\delta$, where $\s{L}^\star\triangleq \s{Q}\frac{k^2}{n^2}\p{1+o(1)}$. Throughout the rest of the proof, we assume that we are in the regime where $\s{L}^\star\geq1$ (as it is on the boundary $\s{Q}=\s{Q}^\star$), otherwise, detection is clearly impossible. Indeed, if $\s{L}^\star<1$, then this implies that the query mechanism do not probe any planted edge. Thus,
\begin{align}
\s{TV}(\pr_{\calH_0'},\pr_{\calH_1'}) &= \s{TV}(\pr_{\calH_0'},\bE_{\mathscr{C}}\pr_{\calH_1'\vert\mathscr{C}})\\
&\leq \bE_{\mathscr{C}}\pp{\s{TV}(\pr_{\calH_0'},\pr_{\calH_1'\vert\mathscr{C}})}\\
&\leq \delta+\sum_{\ell\leq \s{L}^\star}\s{TV}(\pr_{\calH_0'},\pr_{\calH_1'\vert\mathscr{C}=\ell})\pr(\mathscr{C}=\ell),
\end{align}
where the first inequality follows from the convexity of $(P,Q)\mapsto\s{TV}(P,Q)$. The above total variation distance can be upper bounded as follows \cite[Lemma 2.7]{Tsybakov} 
\begin{align}
2\s{TV}^2(\pr_{\calH_0'},\pr_{\calH_1'\vert\mathscr{C}=\ell})\leq\chi^2(\pr_{\calH_1'\vert\mathscr{C}=\ell},\pr_{\calH_0'}),\label{eqn:NP}
\end{align}
where $\chi^2({\pr}_{\calH_1'\vert\mathscr{C}=\ell},\pr_{\calH_0'})$ is the chi-square distance between ${\pr}_{\calH_1'\vert\mathscr{C}=\ell}$ and $\pr_{\calH_0'}$. It should be clear that this total variation is maximized at the boundary, namely, when $\ell = \s{L}^\star$. Accordingly, without loss of generality, below we focus on this case and ignore the conditioning on $\mathscr{C}$, namely, we treat ${\pr}_{\calH_1'\vert\mathscr{C}}$ as ${\pr}_{\calH_1'}$ with $\mathscr{C}=\s{L}^\star$. 

Let us evaluate the likelihood function $\frac{\pr_{\calH_1'}}{\pr_{\calH_0'}}$. Since,
\begin{align}
\pr_{\calH_1'} = \bE_{\calK\sim\s{Unif}_{n,k}}\pp{\pr_{\calH_1'|\calK,\calW_{\calK\times\calK}}},
\end{align}
it is clear that,
\begin{align}
\frac{\pr_{\calH_1'}}{\pr_{\calH_0'}}(\mathbf{X}) &= \bE_{\calK\sim\s{Unif}_{n,k}}\pp{\frac{{\pr}_{\calH_1'|\calK,\calW_{\calK\times\calK}}}{\calQ^{\otimes\s{Q}}}(\mathbf{X})}\\
&=\bE_{\calK\sim\s{Unif}_{n,k}}\pp{\prod_{i\in\calW_{\calK\times\calK}}f(\mathbf{X}_i)},
\end{align}
where $f\triangleq\frac{\calP}{\calQ}$. Thus,
\begin{align}
\chi^2(\pr_{\calH_1'},\pr_{\calH_0'})&+1 = \bE_{\pr_{\calH_0'}}\pp{\frac{\pr_{\calH_1'}}{\pr_{\calH_0'}}(\mathbf{X})}^2 \\
& = \bE_{\calK_1,\calK_2\sim\s{Unif}_{n,k}}\bE_{\pr_{\calH_0'}}\pp{\prod_{i\in\calW_{\calK_1\times\calK_1}}f(\mathbf{X}_i)\prod_{i\in\calW_{\calK_2\times\calK_2}}f(\mathbf{X}_i)}\\
& = \bE_{\calK_1,\calK_2\sim\s{Unif}_{n,k}}\bE_{\pr_{\calH_0'}}\pp{\prod_{i\in\calW_{\calK_1\times\calK_1}\cap\calW_{\calK_2\times\calK_2}}f^2(\mathbf{X}_i)\prod_{i\in\calW_{\calK_1\times\calK_1}\Delta\calW_{\calK_2\times\calK_2}}f(\mathbf{X}_i)}\\
& = \bE_{\calK_1,\calK_2\sim\s{Unif}_{n,k}}\pp{\prod_{i\in\calW_{\calK_1\times\calK_1}\cap\calW_{\calK_2\times\calK_2}}\bE_{\pr_{\calH_0'}}f^2(\mathbf{X}_i)\prod_{i\in\calW_{\calK_1\times\calK_1}\Delta\calW_{\calK_2\times\calK_2}}\bE_{\pr_{\calH_0'}}f(\mathbf{X}_i)}\\
& = \bE_{\calK_1,\calK_2\sim\s{Unif}_{n,k}}\pp{\prod_{i\in\calW_{\calK_1\times\calK_1}\cap\calW_{\calK_2\times\calK_2}}\bE_{\pr_{\calH_0'}}f^2(\mathbf{X}_i)}\\
& = \bE_{\calK_1,\calK_2\sim\s{Unif}_{n,k}}\pp{(1+\chi^2(\calP,\calQ))^{|\calW_{\calK_1\times\calK_1}\cap\calW_{\calK_2\times\calK_2}|}},
\end{align}
where the last equality holds since $\bE_{\pr_{\calH_0'}}f(\mathbf{X}_i)=1$ and $1+\chi^2(\calP,\calQ) = \bE_{\pr_{\calH_0'}}f^2(\mathbf{X}_i)$. To conclude
\begin{align}
\chi^2(\pr_{\calH_1'},\pr_{\calH_0'})+1 = \bE_{\calK_1,\calK_2\sim\s{Unif}_{n,k}}\pp{(1+\chi^2(\calP,\calQ))^{|\calW_{\calK_1\times\calK_1}\cap\calW_{\calK_2\times\calK_2}|}}.\label{eqn:chisdec}
\end{align}
Now, we note that if $\calK_1\cap\calK_2=\emptyset$ then, obviously, $|\calW_{\calK_1\times\calK_1}\cap\calW_{\calK_2\times\calK_2}|=0$, and thus we may focus on sets $\calK_1$ and $\calK_2$ such that $|\calK_1\cap\calK_2|>0$. Now, given $\calK_1$ and $\calK_2$, the random variable  $\s{H}_{\calK_1\cap\calK_2}\triangleq|\calW_{\calK_1\times\calK_1}\cap\calW_{\calK_2\times\calK_2}|$ is clearly upper bounded by $\s{L}^\star\wedge\binom{|\calK_1\cap\calK_2|}{2}$. Therefore, 
\begin{align}
\chi^2(\pr_{\calH_1'},\pr_{\calH_0'})+1 \leq \bE_{\calK_1,\calK_2\sim\s{Unif}_{n,k}}\pp{(1+\chi^2(\calP,\calQ))^{\s{L}^\star\wedge\binom{|\calK_1\cap\calK_2|}{2}}}.   
\end{align}
Using \eqref{eqn:NP} we have
\begin{align}
2\cdot\s{TV}(\pr_{\calH_0'},\pr_{\calH_1'})^2&\leq \chi^2(\pr_{\calH_1'},\pr_{\calH_0'})\\
&\leq \bE_{\calK_1,\calK_2\sim\s{Unif}_{n,k}}\pp{(1+\chi^2(\calP,\calQ))^{\s{L}^\star\wedge\binom{|\calK_1\cap\calK_2|}{2}}}-1\\
&\leq \bE_{\calK_1,\calK_2\sim\s{Unif}_{n,k}}\pp{\exp\p{\s{L}^\star\wedge\binom{|\calK_1\cap\calK_2|}{2}\cdot\chi^2(\calP,\calQ)}}-1.
\end{align}
Next, notice that $\s{H}\triangleq|\calK_1\cap\calK_2|\sim\s{Hypergeometric}(n,k,k)$, and as so, with this notation, we get
\begin{align}
\chi^2(\pr_{\calH_1'},\pr_{\calH_0'})\leq \bE_{\calK_1,\calK_2\sim\s{Unif}_{n,k}}\pp{\exp\p{\s{L}^\star\wedge\binom{\s{H}}{2}\cdot\chi^2(\calP,\calQ)}}-1.\label{eqn:chiSfind}
\end{align}
We analyze the two possible cases: $\s{L}^\star>\binom{\s{H}}{2}$ and $\s{L}^\star\leq\binom{\s{H}}{2}$. Furthermore, in the sequel we assume that $k<\sqrt{2n}$, and then discuss the complementary region. Starting with the case where $\s{L}^\star>\binom{\s{H}}{2}$, we have
\begin{align}
&\bE\pp{\exp\p{\s{L}^\star\wedge\binom{\s{H}}{2}\cdot\chi^2(\calP,\calQ)}\Ind\ppp{\s{L}^\star>\binom{\s{H}}{2}}}\nonumber\\
&\quad\quad\quad\quad\quad\quad=\bE\pp{\exp\p{\binom{\s{H}}{2}\cdot\chi^2(\calP,\calQ)}\Ind\ppp{\s{L}^\star>\binom{\s{H}}{2}}}.
\end{align}
To analyze this, we use the following tail bound on Hypergeometric random variable (see, e.g., \cite{arias2014community,WuXx}),
\begin{align}
    \pr\p{\s{H}\geq\s{h}}\leq \exp\p{-k\cdot\s{d_{KL}}\p{\s{h}/k||\rho}},\label{eqn:HyprTailBound}
\end{align}
for any $\s{h}/k\geq \rho$, where $\rho\triangleq k/n$. Using the definition of the KL divergence and the identity $(1-x)\log(1-x)\geq -x$, we get
\begin{align}
\s{d_{KL}}\p{a||b} \geq a\log\frac{a}{b}-a.\label{eqn:KLineq}
\end{align}
Using \eqref{eqn:KLineq}, we note that,
\begin{align}
k\cdot\s{d_{KL}}\p{\s{h}/k||\rho} &\geq \s{h}\log\frac{\s{h}}{k\rho}-h\\
& = \s{h}\log\frac{n\s{h}}{k^2}-h.
\end{align}
Accordingly, we have
\begin{align}
&\bE\pp{\exp\p{\binom{\s{H}}{2}\cdot\chi^2(\calP,\calQ)}\Ind\ppp{\s{L}^\star>\binom{\s{H}}{2}}} \nonumber\\
&\quad\quad\quad\quad\quad\quad\quad\quad\quad\leq \pr(\s{H}\leq 1)+\sum_{\s{h}=2}^{\sqrt{2\s{L}^\star}}e^{\frac{\s{h}(\s{h}-1)}{2}\cdot\chi^2(\calP,\calQ)-k\cdot\s{d_{KL}}\p{\s{h}/k||\rho}}\\
&\quad\quad\quad\quad\quad\quad\quad\quad\quad\leq 1+\sum_{\s{h}=2}^{\sqrt{2\s{L}^\star}}e^{\s{h}\p{\frac{\s{h}-1}{2}\cdot\chi^2(\calP,\calQ)-\log\frac{n\s{h}}{k^2}+1}}.\label{eqn:lowerTail0}
\end{align}
For $a>0$ fixed, the function $x\to ax-\log x$ is decreasing on $(0, 1/a)$ and increasing on $(1/a,\infty)$. Therefore,
\begin{align}
\frac{\s{h}-1}{2}\cdot\chi^2(\calP,\calQ)-\log\frac{n\s{h}}{k^2}\leq-\omega,
\end{align}
where
\begin{align}
\omega\triangleq \min\p{\log\frac{n}{k^2}-\frac{1}{2}\cdot\chi^2(\calP,\calQ),\log\frac{n\sqrt{2\s{L}^\star}}{k^2}-\frac{\sqrt{2\s{L}^\star}-1}{2}\cdot\chi^2(\calP,\calQ)}.\label{eqn:omega}
\end{align}
Substituting $\s{L}^\star = k^2\s{Q}/\s{n^2}$, while noticing that $\log\frac{n\sqrt{2\s{L}^\star}}{k^2} = \log\frac{n}{k}+O(\log\frac{\log n}{k})= [1-o(1)]\cdot\log\frac{n}{k}$, the second term in the minimum tends to $\infty$ if 
\begin{align}
\s{Q}<(2-\epsilon)\cdot\frac{n^2}{k^2\chi^4(\calP,\calQ)}\log^2\frac{n}{k},\label{eqn:finalCond1}
\end{align}
for any $\epsilon>0$. This is also the case of the first term, since
\begin{align}
\log\frac{n}{k^2}-\frac{1}{2}\cdot\chi^2(\calP,\calQ) = \log\frac{n\sqrt{\s{L}^\star}}{k^2}-\frac{\sqrt{\s{L}^\star}-1}{2}\cdot\chi^2(\calP,\calQ)+\frac{\sqrt{\s{L}^\star}}{2}\chi^2(\calP,\calQ)-\log\sqrt{\s{L}^\star},
\end{align}
with the second difference bounded from below if $\chi^2(\calP,\calQ)$ is finite. Hence the sum in \eqref{eqn:lowerTail0} converges to zero, and we get
\begin{align}
\bE\pp{\exp\p{\binom{\s{H}}{2}\cdot\chi^2(\calP,\calQ)}\Ind\ppp{\s{L}^\star>\binom{\s{H}}{2}}}\leq1+o(1).\label{eqn:lowerTail}
\end{align}
Next, we turn to the case where $\s{L}^\star\leq\binom{\s{H}}{2}$. We have
\begin{align}
&\bE\pp{\exp\p{\s{L}^\star\wedge\binom{\s{H}}{2}\cdot\chi^2(\calP,\calQ)}\Ind\ppp{\s{L}^\star
\leq\binom{\s{H}}{2}}}\nonumber\\
&\quad\quad\quad\quad\quad\quad=\exp\p{\s{L}^\star\cdot\chi^2(\calP,\calQ)}\cdot\pr\pp{\s{L}^\star\leq\binom{\s{H}}{2}}\\
&\quad\quad\quad\quad\quad\quad\leq\exp\p{\s{L}^\star\cdot\chi^2(\calP,\calQ)-\sqrt{2\s{L}^\star}\log\frac{n\sqrt{2\s{L}^\star}}{k^2}+\sqrt{2\s{L}^\star}}\\
&\quad\quad\quad\quad\quad\quad=\exp\pp{-\sqrt{2\s{L}^\star}\p{\log\frac{n\sqrt{2\s{L}^\star}}{k^2}-1-\sqrt{\frac{\s{L}^\star}{2}}\cdot\chi^2(\calP,\calQ)}},\label{eqn:expo}
\end{align}
where the inequality follows from \eqref{eqn:HyprTailBound} and \eqref{eqn:KLineq}. Substituting $\s{L}^\star=k^2\s{Q}/n^2$, and noticing that $\log\frac{n\sqrt{2\s{L}^\star}}{k^2} = \log\frac{n}{k}+O(\log\frac{\log n}{k})= [1-o(1)]\cdot\log\frac{n}{k}$, it is clear that the right hand side of \eqref{eqn:expo} converges to zero as long as
\begin{align}
\s{Q}<(2-\epsilon)\cdot\frac{n^2}{k^2\chi^4(\calP,\calQ)}\log^2\frac{n}{k},\label{eqn:finalCond2}
\end{align}
for any $\epsilon>0$. Accordingly, under this condition \eqref{eqn:expo} converges to zero. Combining \eqref{eqn:chiSfind}, \eqref{eqn:lowerTail}, and \eqref{eqn:expo}, we get that for $k<\sqrt{2n}$,
\begin{align}
    \chi^2(\pr_{\calH_1'},\pr_{\calH_0'})\leq 1+o(1)-1 = o(1),
\end{align}
provided that \eqref{eqn:finalCond1} and \eqref{eqn:finalCond2} hold (which coincide), as required (see, \eqref{eqn:ITlimit} in Theorem~\ref{thm:1}). Finally, we mention that the complementary case, where $k>\sqrt{2n}$ follow from almost the same arguments as above, with the following small modifications; specifically, in this regime, we use the following tail bound on Hypergeometric random variable
\begin{align}
    \pr\p{\s{H}<\s{h}}\leq \exp\p{-k\cdot\s{d_{KL}}\p{1-\s{h}/k||1-\rho}},\label{eqn:HyprTailBound2}
\end{align}
for any $\s{h}/k<\rho$, where $\rho=k/n$, and we lower bound the KL divergence using,
\begin{align}
\s{d_{KL}}\p{a||b} \geq -a+(1-a)\log\frac{1-a}{1-b}.\label{eqn:KLineq0}
\end{align}
Then, repeating the same steps above it can be shown that now the lower tail term is asymptotically small, namely,
\begin{align}
\bE\pp{\exp\p{\binom{\s{H}}{2}\cdot\chi^2(\calP,\calQ)}\Ind\ppp{\s{L}^\star>\binom{\s{H}}{2}}}\leq o(1),
\end{align}
while the upper tail term is,
\begin{align}
\bE\pp{\exp\p{\s{L}^\star\wedge\binom{\s{H}}{2}\cdot\chi^2(\calP,\calQ)}\Ind\ppp{\s{L}^\star
\leq\binom{\s{H}}{2}}}\leq1+o(1),
\end{align}
and thus $\chi^2(\pr_{\calH_1'},\pr_{\calH_0'})\leq 1+o(1)-1 = o(1)$, 
provided that the same conditions as in \eqref{eqn:finalCond1} and \eqref{eqn:finalCond2} hold. This concludes the proof (see, \eqref{eqn:ITlimit} in Theorem~\ref{thm:1}). 

\section{Conclusion and Outlook}\label{sec:conc}

In this paper, we formulated and analyzed a variant of the classical $\s{PDS}$ problem, where one can only observe a small part of the graph using non-adaptive edge queries. This problem is relevant, for example, when access to the edges (connections) between vertices (individuals) may be scarce due to privacy concerns. For this model, we derived the number of queries necessary and sufficient for detecting the presence of the planted subgraph, up to a constant factor. For the special case of planted cliques, our results are completely tight. 
This work also has left number of specific problems open, including the following:
\begin{itemize}
   \item It would be quite interesting to provide any concrete evidence for our conjectured statistical-computational gap either by    means of an average-case reduction from the planted clique problem, or failure of classes of powerful algorithms (such as, sum-of-squares hierarchy, low-degree polynomials, etc.), below the computational barrier.
    \item Our bounds are almost tight in the sense that there is a multiplicative constant gap between our lower and upper bounds. As was mentioned in the paper, we believe that the source for this gap is our lower bound; the $\chi^4(p||q)$ factor should be in fact $d_{\s{KL}}^2(p||q)$. We suspect that one way to prove this is by applying a truncation procedure on the likelihood analysis when deriving the lower bound on the risk.  
    \item In this paper, we analyzed the regime where the edge probabilities $p$ and $q$ are fixed and independent of $n$. The regime where $p$ and $q$ depend on $n$, e.g., both decay polynomially in $n$, is quite important and challenging. It would be interesting to find the phase diagram for this case. Note that here $\chi^2(p||q)$ is not a constant anymore, but decays with $n$ polynomially fast.
    \item While in this paper we have focused on the detection problem, it is interesting to consider the recovery and partial recovery variants of our setting as well. 
\end{itemize}


\bibliographystyle{alpha}
\bibliography{bibfile}

\newcommand{\etalchar}[1]{$^{#1}$}
\begin{thebibliography}{AHHM20}

\bibitem[ACV14]{arias2014community}
Ery Arias-Castro and Nicolas Verzelen.
\newblock {Community detection in dense random networks}.
\newblock {\em The Annals of Statistics}, 42(3):940 -- 969, 2014.

\bibitem[AHHM20]{Alweiss2020}
Ryan Alweiss, Chady~Ben Hamida, Xiaoyu He, and Alexander Moreira.
\newblock On the subgraph query problem.
\newblock {\em Combinatorics, Probability and Computing}, page 1–16, Jul
  2020.

\bibitem[ALL{\etalchar{+}}16]{Anagnostopoulos2016CommunityDO}
A.~Anagnostopoulos, Jakub Lacki, Silvio Lattanzi, S.~Leonardi, and Mohammad
  Mahdian.
\newblock Community detection on evolving graphs.
\newblock In {\em NIPS}, 2016.

\bibitem[BB20]{brennan20a}
Matthew Brennan and Guy Bresler.
\newblock Reducibility and statistical-computational gaps from secret leakage.
\newblock In {\em Proceedings of Thirty Third Conference on Learning Theory},
  volume 125, pages 648--847, 09--12 Jul 2020.

\bibitem[BBH18]{brennan2018reducibility}
Matthew Brennan, Guy Bresler, and Wasim Huleihel.
\newblock Reducibility and computational lower bounds for problems with planted
  sparse structure.
\newblock In {\em COLT}, pages 48--166, 2018.

\bibitem[BBH19]{Brennan2019UniversalityOC}
Matthew Brennan, Guy Bresler, and Wasim Huleihel.
\newblock Universality of computational lower bounds for submatrix detection.
\newblock In {\em COLT}, 2019.

\bibitem[BI13]{butucea2013detection}
Cristina Butucea and Yuri~I Ingster.
\newblock Detection of a sparse submatrix of a high-dimensional noisy matrix.
\newblock {\em Bernoulli}, 19(5B):2652--2688, 2013.

\bibitem[BR13]{berthet2013complexity}
Quentin Berthet and Philippe Rigollet.
\newblock Complexity theoretic lower bounds for sparse principal component
  detection.
\newblock In {\em Proceedings of the 26th Annual Conference on Learning
  Theory}, volume~30, pages 1046--1066, 12--14 Jun 2013.

\bibitem[CC18]{candogan2018finding}
Utkan~Onur Candogan and Venkat Chandrasekaran.
\newblock Finding planted subgraphs with few eigenvalues using the schur--horn
  relaxation.
\newblock {\em SIAM Journal on Optimization}, 28(1):735--759, 2018.

\bibitem[CFGH18]{Conlon18}
D.~Conlon, J.~Fox, A.~Grinshpun, and X.~He.
\newblock Online ramsey numbers and the subgraph query problem.
\newblock {\em arXiv: Combinatorics}, pages 159--194, 2018.

\bibitem[CLR17]{cai2015computational}
Tony Cai, Tengyuan Liang, and Alexander Rakhlin.
\newblock Computational and statistical boundaries for submatrix localization
  in a large noisy matrix.
\newblock {\em Annals of Statistics}, 45(4):1403--1430, 08 2017.

\bibitem[CX16]{chen2016statistical}
Yudong Chen and Jiaming Xu.
\newblock Statistical-computational tradeoffs in planted problems and submatrix
  localization with a growing number of clusters and submatrices.
\newblock {\em Journal of Machine Learning Research}, 17(27):1--57, 2016.

\bibitem[FGN{\etalchar{+}}20]{Feige20}
Uriel Feige, David Gamarnik, Joe Neeman, Miklós~Z. Rácz, and Prasad Tetali.
\newblock Finding cliques using few probes.
\newblock {\em Random Structures \& Algorithms}, 56(1):142--153, 2020.

\bibitem[FKSV15]{Ferber15}
Asaf Ferber, Michael Krivelevich, Benny Sudakov, and Pedro Vieira.
\newblock Finding hamilton cycles in random graphs with few queries.
\newblock {\em Random Struct. Algorithms}, 49, 05 2015.

\bibitem[FKSV17]{Ferber17}
Asaf Ferber, M.~Krivelevich, B.~Sudakov, and Pedro Vieira.
\newblock Finding paths in sparse random graphs requires many queries.
\newblock {\em Random Struct. Algorithms}, 50:71--85, 2017.

\bibitem[GMZ17]{gao2017sparse}
Chao Gao, Zongming Ma, and Harrison~H Zhou.
\newblock Sparse {CCA}: Adaptive estimation and computational barriers.
\newblock {\em The Annals of Statistics}, 45(5):2074--2101, 2017.

\bibitem[HKW16]{Hartmann2016ClusteringEN}
Tanja Hartmann, A.~Kappes, and D.~Wagner.
\newblock Clustering evolving networks.
\newblock In {\em Algorithm Engineering}, 2016.

\bibitem[HWX15]{hajek2015computational}
Bruce~E Hajek, Yihong Wu, and Jiaming Xu.
\newblock Computational lower bounds for community detection on random graphs.
\newblock In {\em COLT}, pages 899--928, 2015.

\bibitem[HWX16]{hajek2016achieving}
Bruce Hajek, Yihong Wu, and Jiaming Xu.
\newblock Achieving exact cluster recovery threshold via semidefinite
  programming.
\newblock {\em IEEE Transactions on Information Theory}, 62(5):2788--2797,
  2016.

\bibitem[HWX17]{Hajek17}
Bruce Hajek, Yihong Wu, and Jiaming Xu.
\newblock Information limits for recovering a hidden community.
\newblock {\em IEEE Transactions on Information Theory}, 63(8):4729--4745,
  2017.

\bibitem[MAC20]{mardia2020finding}
Jay Mardia, Hilal Asi, and Kabir~Aladin Chandrasekher.
\newblock Finding planted cliques in sublinear time.
\newblock {\em ArXiv}, abs/2004.12002, 2020.

\bibitem[Mon15]{montanari2015finding}
Andrea Montanari.
\newblock Finding one community in a sparse graph.
\newblock {\em Journal of Statistical Physics}, 161(2):273--299, 2015.

\bibitem[MS17a]{Mazumdar2017ClusteringWN}
A.~Mazumdar and B.~Saha.
\newblock Clustering with noisy queries.
\newblock In {\em NIPS}, 2017.

\bibitem[MS17b]{Mazumdar2017QueryCO}
A.~Mazumdar and B.~Saha.
\newblock Query complexity of clustering with side information.
\newblock In {\em NIPS}, 2017.

\bibitem[MW15]{ma2015computational}
Zongming Ma and Yihong Wu.
\newblock Computational barriers in minimax submatrix detection.
\newblock {\em The Annals of Statistics}, 43(3):1089--1116, 2015.

\bibitem[RS20]{racz2020finding}
Miklós~Z. Rácz and Benjamin Schiffer.
\newblock Finding a planted clique by adaptive probing.
\newblock {\em ALEA Latin American Journal of Probability and Mathematical
  Statistics}, 17:775–790, 2020.

\bibitem[Tsy08]{Tsybakov}
Alexandre~B. Tsybakov.
\newblock {\em Introduction to Nonparametric Estimation}.
\newblock Springer Publishing Company, Incorporated, 1st edition, 2008.

\bibitem[VAC{\etalchar{+}}15]{verzelen2015community}
Nicolas Verzelen, Ery Arias-Castro, et~al.
\newblock Community detection in sparse random networks.
\newblock {\em The Annals of Applied Probability}, 25(6):3465--3510, 2015.

\bibitem[VH16]{Vinayak2016CrowdsourcedCQ}
Ramya~Korlakai Vinayak and B.~Hassibi.
\newblock Crowdsourced clustering: Querying edges vs triangles.
\newblock In {\em NIPS}, 2016.

\bibitem[WBP16]{wang2016average}
Tengyao Wang, Quentin Berthet, and Yaniv Plan.
\newblock Average-case hardness of rip certification.
\newblock In {\em Advances in Neural Information Processing Systems}, pages
  3819--3827, 2016.

\bibitem[WBS16]{wang2016statistical}
Tengyao Wang, Quentin Berthet, and Richard~J Samworth.
\newblock Statistical and computational trade-offs in estimation of sparse
  principal components.
\newblock {\em The Annals of Statistics}, 44(5):1896--1930, 2016.

\bibitem[WX20]{wu2018statistical}
Yihong Wu and Jiaming Xu.
\newblock Statistical problems with planted structures: Information-theoretical
  and computational limits.
\newblock In Miguel R.~D. Rodrigues and Yonina~C. Eldar, editors, {\em
  Information-Theoretic Methods in Data Science}. Cambridge University Press,
  Cambridge, 2020.

\bibitem[WX23]{WuXx}
Yihong Wu and Jiaming Xu.
\newblock {\em Statistical inference on graphs: Selected Topics}.
\newblock Lecture notes, 2023.

\end{thebibliography}
\end{document}